\keywords{Formal synthesis, Switched stochastic systems, Robust Markov decision processes, Wasserstein distance, Data-driven control, Nonlinear Control, Nonlinear disturbances}
\theoremstyle{plain}
\newtheorem{thm}{Theorem}[section]
\newtheorem{lemma}[thm]{Lemma}
\newtheorem{proposition}[thm]{Proposition}
\newtheorem*{theorem*}{Theorem}
\newtheorem{remark}[thm]{Remark}
\newtheorem{definition}[thm]{Definition}
\newtheorem{problem}[thm]{Problem}
\numberwithin{equation}{section}
\renewcommand{\emph}[1]{\textit{#1}}
\newcommand{\IG}[1]{\textcolor{green}{[IG: #1]}}
\newcommand{\LL}[1]{\textcolor{red}{[LL: #1]}}
\renewcommand{\path}{\omega}
\newcommand{\x}{\textbf{{x}}} 
\newcommand{\w}{\textbf{{w}}}
\newcommand{\pathX}{\path_{x}}
\newcommand{\pathXbold}{\boldsymbol{\path}_{x}}
\newcommand{\PathX}{\Omega_{x}}
\newcommand{\policy}{\sigma}
\newcommand{\Setofpolicies}{\Sigma}
\newcommand{\adversary}{\xi}
\newcommand{\Adversary}{\Xi}
\newcommand{\pathrmdp}{\omega}
\newcommand{\pathrmdpbold}{\boldsymbol{\omega}}
\newcommand{\Pathrmdp}{\Omega}
\newcommand{\PRMDP}{\M^\varphi}
\newcommand{\last}{\mathrm{last}}
\newcommand{\reals}{\mathbb{R}}
\newcommand{\naturals}{\mathbb{N}}
\newcommand{\M}{\mathcal{M}}
\newcommand{\I}{\mathcal{I}}
\newcommand{\D}{\mathcal{D}}
\newcommand{\A}{\mathcal{A}}
\newcommand{\LTLf}{LTL$_f$\xspace}
\newcommand{\Globally}{\mathcal{G}}
\newcommand{\Eventually}{\mathcal{F}}
\newcommand{\Until}{\mathcal{U}}
\newcommand{\Next}{\mathcal{X}}
\newcommand{\Prop}{\mathfrak{p}}
\newcommand{\safe}{\mathrm{safe}}
\begin{document}

\AddToShipoutPictureBG*{%
  \AtPageUpperLeft{%
    \hspace{16.5cm}%
    \raisebox{-1.5cm}{%
      \makebox[0pt][r]{To appear in \textit{Learning for Dynamics and Control Conference} (\textit{L4DC}), July 2024.}}}}

\title[Data-Driven Strategy Synthesis]{Data-Driven Strategy Synthesis for Stochastic Systems with Unknown Nonlinear Disturbances}

\author[I.Gracia]{Ibon Gracia}

\author[D. Boskos]{Dimitris Boskos}

\author[L. Laurenti]{Luca Laurenti}

\author[M. Lahijanian]{Morteza Lahijanian}

\thanks{The authors are  with the Smead
Department of Aerospace Engineering Sciences, CU Boulder, Boulder, CO, \tt{\{ibon.gracia, morteza.lahijanian\}@colorado.edu}, and 
the Delft Center for Systems and Control, 
Faculty of Mechanical, Maritime and Materials Engineering, Delft University of Technology \tt{\{d.boskos,l.laurenti\}@tudelft.nl}.}

\begin{abstract}
In this paper, we introduce a data-driven framework for synthesis of provably-correct controllers for general nonlinear switched systems under complex specifications. The focus is on systems with \emph{unknown} disturbances whose effects on the dynamics of the system is nonlinear. 
The specifications are assumed to be given as \emph{linear temporal logic over finite traces} (\LTLf) formulas.
Starting from observations of either the disturbance or the state of the system, we first learn an \emph{ambiguity set} that contains the unknown distribution of the disturbances with a user-defined confidence. Next, we construct a \emph{robust Markov decision process} (RMDP) as a finite abstraction of the system.  By composing the RMDP with the automaton obtained from the \LTLf formula and performing optimal robust value iteration on the composed RMDP, we synthesize a strategy that yields a high probability that the uncertain system satisfies the specifications.  Our empirical evaluations on systems with a wide variety of disturbances show that the strategies synthesized with our approach lead to high satisfaction probabilities and validate the theoretical guarantees.%
\end{abstract}

\maketitle


\section{Introduction}
\label{sec:introduction}

Stochastic dynamical systems are powerful mathematical models commonly employed to describe complex systems in many applications, ranging from autonomous systems \cite{kumar2015stochastic} to biological systems \cite{anderson2011continuous}. Many of these applications are \emph{safety-critical}, and formal correctness guarantees are required to avoid fatal or costly accidents. Another common feature of these applications is that they are characterized by nonlinear dynamics and that the characteristics of the noise affecting the system dynamics are often uncertain and need to be estimated from data \cite{rahimian2019distributionally}. That is, the system is both characterized by \emph{aleatoric} uncertainty (i.e., the uncertainty intrinsic in the system dynamics) and by \emph{epistemic} uncertainty (i.e., the uncertainty due to the lack of knowledge of the system dynamics). 
A challenging research question that still remains open is \emph{how to provide formal correctness guarantees for nonlinear stochastic systems whose noise distribution is unknown?} which is the focus of this work.

Formal guarantees of correctness for nonlinear stochastic systems are generally provided either via formal abstractions into a finite transition model, typically a variant of a Markov model \cite{lahijanian2015formal,cauchi2019efficiency,lavaei2022automated}, or via stochastic barrier functions \cite{jagtap2020formal,mazouz2022safety,santoyo2021barrier,lechner2022stability}. 
However, all those works assume a known model and  known disturbances, i.e., no \emph{epistemic uncertainty}.
To tackle this issue and account for epistemic uncertainty, data-driven formal approaches have been recently developed to estimate model dynamics from data \cite{martin2023regret,jackson2021strategy, schon2023bayesian}. These works generally employ concentration inequalities to estimate the system dynamics from noisy measurements of the system and then rely on existing formal methods for verification and control of the estimated model. However, these approaches generally assume that the noise distribution of the system is known and only the vector field of the dynamics needs to be learnt. In addition, they often restrict themselves to linear disturbances. The complementary case where the noise distribution is unknown and needs to be estimated from data is considered in \cite{badings2023robust,mathiesen2023inner, gracia2023distributionally}, which however, are specific for linear systems or additive noise.

In this paper, we propose a formal data-driven approach for controller synthesis of nonlinear stochastic systems with unknown disturbances against linear temporal logic specifications over finite traces (LTLf) \cite{de2013linear}. Our approach combines tools from optimal transport theory with abstractions methods. In particular, given a dataset, we first exploit results from \cite{fournier2022convergence} to build an \emph{ambiguity set} for the disturbance distribution, that is, a set of distributions that contains the unknown law of the disturbance with high confidence. Then, we abstract the resulting uncertain model into a robust Markov decision process (RMDP) \cite{nilim2005robust}, which formally accounts for both aleatoric and epistemic uncertainty in the system. The resulting abstraction is employed to synthesize a controller that maximizes the probability that the system satisfies an \LTLf property and is robust against the uncertainties, thus guaranteeing correctness. 
The efficacy of our framework is illustrated on various benchmarks showing how our approach can successfully synthesize high-confidence formal controllers for uncertain nonlinear stochastic systems from data.

The key contributions of this paper are:
\vspace{-2mm}
\begin{itemize}
    \item a novel strategy synthesis framework for nonlinear switched stochastic systems with general (e.g., non-additive) disturbances of unknown distributions under \LTLf specifications,
    \item an efficient procedure to obtain a finite and sound abstraction of said class of systems,
    \item probabilistic satisfaction guarantees whose confidence is independent of the abstraction size, 
    \item extensive case studies and empirical evaluations involving five different systems, several choices of hyperparameters, and comparisons with the state-of-the-art approaches.
\end{itemize}
\vspace{-1mm}
\noindent
\subsection*{Basic Notation}
\label{sec:basic_notation}

The set of non-negative integers is defined as $\naturals_0 = \naturals\cup\{0\}$. For a set $X$, $|X|$ denotes its cardinality. Given $X \subseteq \mathbb{R}^n$ we denote by $\mathds{1}_{X}$ its indicator function, that is $\mathds{1}_{X}(x) = 1$ if $x\in X$ and $0$ otherwise. We let $\mathcal{B}(X)$ be the Borel $\sigma$-algebra of $X$ and $\mathcal{P}(X)$ denote the set of probability distributions on $(X, \mathcal{B}(X))$. For $P\in\mathcal{P}(X)$ and $x\in X$, $P({x})$ denotes the probability of the (singleton) event $\{x\}$. Given a continuous cost $c:X\times X\rightarrow \mathbb{R}_{\ge 0 }$, we denote by $\mathcal{P}_c(X)$ the set of distributions $P\in\mathcal{P}(X)$ with $\int_X c(x,x') P(dx) < \infty$ for some $x'\in X$. The optimal transport discrepancy between  $P,P'\in \mathcal{P}_c(X)$ is defined as
$
    \mathcal{T}^{c}(P,P') = \inf_{\pi\in\Pi(P,P')}\int_{X\times X} c(x,x')\pi(dx,dx'),
$ 
where $\Pi(P,P')$ is the set of distributions on $\mathcal{P}(X\times X)$ with marginals $P$ and $P'$. For an $l$-norm $\|\cdot\|_l$, with $l \in[1,\infty]$, and $s\ge 1$, if $c(x,x') = \|x - x'\|_l^s$ for all $x,x'\in X$, we use the notation $\mathcal{T}_s^{(l)}(P,P')$. Furthermore, if $X\subseteq \mathbb{R}^d$, then $\mathcal{W}_s(P,P') :=\big( \mathcal{T}_s^{(l)}(P,P') \big)^{1/s}$ is the $s$-Wasserstein distance between $P$ and $P'$.
We let $\delta_{x}\in\mathcal{P}(X)$ be the Dirac measure located at $x\in X$. We use bold symbols, e.g., $\x$, to denote random variables.
\section{Problem Formulation}
\label{sec:problem_formulation}

In this work, we focus on discrete-time switched stochastic processes described by 
%
\begin{align}
\label{eq:system}
    \x_{k+1} = f_{a_k}(\x_k,\w_k),
\end{align}
where
$\x_k\in \mathbb{R}^n$ is the state and $a_k\in A := \{a_1,\dots,a_{|A|}\}$ is the control (mode or action) at time $k$. Noise term $\w_k \in W$ is an i.i.d. random variable  taking values in a bounded set $W\subset \mathbb{R}^d$ with probability distribution
$P_w$ identical at each time step $k$. Distribution $P_w$, however, is assumed to be \emph{unknown}.  For each $a\in A$, $f_a: \reals^n \times W \to \reals^n$ is the vector field, which is assumed to be
Lipschitz continuous in $w$ with Lipschitz constant $L_w(x,a)\geq 0$, i.e., 
$\|f_a(x,w) - f_a(x,w')\| \le L_w(x,a) \|w - w'\|$ for all $w,w'\in W.$
Consequently, Process \eqref{eq:system} describes a general model of nonlinear switched stochastic processes, which encompasses many models commonly used in practice. 

For $x_0,\ldots,x_K \in \mathbb{R}^n$ and $K \in \naturals_0$, $\pathX =x_0 \xrightarrow{a_0} x_1 \xrightarrow{a_1}  \ldots \xrightarrow{a_{K-1}} x_K$ represents a finite \emph{trajectory} 
of Process~\eqref{eq:system}, and $\PathX$ is the set of all finite trajectories (of any length).
We denote by $\pathX(k)$ 
the state of $\pathX$ at time $k\in\{0,\dots, K\}$. 
A \emph{switching strategy} $\policy_x: \PathX \to A$ is a function that assigns an action $a \in A$ to each finite trajectory. 
For $x\in\reals^n$, $a\in A$ and $B\in\mathcal{B}(\reals^n)$, the \emph{transition kernel}
of Process~\eqref{eq:system} is
$T^a(B\mid x) = \int_W \mathds{1}_{B}(f_a(x,v))P_w(dv) $.
Given a strategy $\policy_x$ and an initial condition $x_0\in\reals^n$, the transition kernel uniquely defines a probability measure $P_{x_0}^{\policy_x}$ over the trajectories of Process \eqref{eq:system} 
\cite{bertsekas1996stochastic} such that $P_{x_0}^{\policy_x}[\pathXbold(k) \in X]$
represents the probability that, under strategy $\policy_x$, $\x_k $ is in set $X\subseteq \reals^n$ starting from $x_0$.
%
%

%
%

%

Given a bounded set $X\subset \mathbb{R}^n$, we are interested in the temporal properties of Process~\eqref{eq:system}
w.r.t. a
set of regions of interest $R := \{r_1,\dots,r_{|R|-1}, r_u\}$ with $r_i\subseteq X$ for all $i\leq |R|-1$, and $r_u = \reals^n\setminus X$. To each region $r_i\in R$, we associate an atomic proposition $\Prop_i$, and say that $\Prop_i \equiv \top$ (i.e., $\Prop_i$ is \textit{true}) at $x \in \reals^n$ iff $x\in r_i$. Denote by $AP := \{\Prop_i,\dots,\Prop_{|R|-1}, \Prop_u\}$ the set of all atomic propositions and define the labeling function $L:\reals^n\rightarrow 2^{AP}$ as the function that maps each state $x\in \reals^n$ to the set of atomic propositions that are true at that state, i.e., $\Prop_i\in L(x)$ iff $x\in r_i$. Then, to each trajectory $\pathX= x_0 \xrightarrow{a_0} x_1 
\xrightarrow{a_1}  \ldots \xrightarrow{a_{K-1}} x_K$,
we can associate the (observation) \emph{trace} 
$\rho = \rho_0\rho_1\dots\rho_K, $
where $\rho_k := L(x_k)$ for all $k\in\{0,\dots,K\}$. 

To specify temporal properties of Process~\eqref{eq:system}, we use \LTLf \cite{de2013linear}, which
is a formal language commonly employed to express finite behaviors.
%
    An \LTLf formula $\varphi$ is defined from a set of atomic propositions $AP$ and is closed under the Boolean connectives ``negation'' ($\neg$) and ``conjunction'' ($\land$), and the temporal operators ``until'' ($\Until$) and ``next'' ($\Next$):
    \begin{align*}
        \varphi := \top \mid \Prop \mid \neg\varphi \mid \varphi_1\land\varphi_2 \mid \Next\varphi \mid \varphi_1\Until\varphi_2 
    \end{align*}
    where $\Prop\in AP$ and $\varphi_1,\varphi_2$ are \LTLf formulas themselves.
%
The temporal operators ``eventually'', $\Eventually$, and ``globally'', $\Globally$, are defined as  $\Eventually \varphi := \top\Until\varphi$ and $\Globally \varphi := \neg\Eventually(\neg\varphi)$.
The semantics (interpretation) of \LTLf are defined over finite traces \cite{de2013linear}. We say a trajectory $\pathX$ satisfies a formula $\varphi$, denoted by $\pathX \models \varphi$, if a prefix of its trace $\rho$ satisfies the formula.

We are interested in synthesizing a strategy that maximizes the probability that Process~\eqref{eq:system} satisfies a given \LTLf formula, while accounting for the uncertainty coming from the fact that $P_w$ is unknown and can only be estimated from data. A formal statement of the problem is as follows.

\begin{problem}
    \label{prob:Syntesis}
    %
    Consider the switched stochastic Process~\eqref{eq:system}, a set $\{\w^{(i)}\}_{i=1}^N$ of $N$ i.i.d. samples of $P_w$, a bounded set $X\subset\mathbb{R}^n$, and an \LTLf formula $\varphi$ defined over the regions of interest $R$. Then, given a confidence level $1-\beta \in (0,1)$, synthesize a switching strategy $\sigma_x$ such that, for every initial state $x_0\in X$, with confidence at least $1-\beta$, $\sigma_x$ maximizes the probability that the paths $\pathX \in \PathX$ satisfy $\varphi$ while remaining in $X$, i.e.,
      $\sigma_x \in \arg \max_{\sigma_x}  P_{x_0}^{\sigma_x}[\pathX \models \varphi \land \Globally \neg \Prop_u]$.
\end{problem}
Note that in Problem \ref{prob:Syntesis} the noise distribution $P_w$ is unknown. Consequently, $P_w$ must be estimated from data. This inevitably leads to a confidence probability on the resulting strategy, which irrespective of the method of quantification, depends on the data. Problem~\ref{prob:Syntesis} requires this confidence to be at least $1-\beta$. We should also remark that we assume that a set of $N$ samples from the noise is given. 
%
This may seem like a strong assumption as, in general, only observations $\{\mathbf y^{(i)}\}_{i=1}^{N+1}$ of the state can be collected. However, in several cases, e.g., when successive full-state observations $\mathbf y^{(i)}=\mathbf x^{(i)}$ are available, one can use the system's dynamics $\mathbf y^{(i+1)}=f_{a^{(i)}}(\mathbf y^{(i)},\mathbf w^{(i)})$ and knowledge of its actions $a^{(i)}$ to obtain i.i.d. samples $\{\mathbf w^{(i)}\}_{i=1}^{N}$ of the noise distribution $P_w$.
For instance, this is possible when  $f$ is linear in $w$ or, more generally, if the data are collected where $f_{a^{(i)}}(\mathbf y^{(i)},\cdot):W\longrightarrow\mathbb{R}^n$ is injective. Such regions can be easily identified in many non-linear systems commonly employed in practice, as shown in the examples in Section~\ref{sec:case_studies}.

\paragraph{Overview of the approach} {
Since $P_w$ is unknown and the state space of the process is uncountable,  obtaining the exact solution to Problem \ref{prob:Syntesis} is generally infeasible. Hence, we follow an abstraction-based approach instead and synthesize an optimal strategy that is robust against the uncertainty in estimating $P_w$ from data.} First, in Section \ref{sec:learning},
we use the noise samples to learn an ambiguity set of probability distributions that contains $P_w$ with confidence $1-\beta$. Next, in Section \ref{sec:imdp_abstraction}, we obtain an interval MDP (IMDP) abstraction of Process~\eqref{eq:system} by considering that $\w$ is distributed according to a fixed distribution supported on the samples.
In Section~\ref{sec:rmdp_abstraction}, to account for
all the distributions in the ambiguity set, we expand the set of transition probabilities of the IMDP, obtaining a robust MDP. Then, in Section~\ref{sec:strategy_synthesis}, we synthesize a (robust) optimal strategy for the robust MDP via robust dynamic programming. Finally, we refine this strategy to the original system, obtaining a guaranteed lower bound on the satisfaction of the \LTLf specification for Process \eqref{eq:system}.

\section{Preliminaries on Robust Markov Decision Processes}
\label{sec:robust_MDPs}

A robust MDP (RMDP) is a generalization of a Markov decision process in which the transition probability distributions are uncertain and belong to an ambiguity set \cite{nilim2005robust,wiesemann2013robust}.
\begin{definition}[Robust MDP]
\label{def:robust_mdp}
    A labelled robust Markov decision process (RMDP) $\M$ is a tuple $\M = (Q,A,\Gamma,q_0,AP,L)$, where
    \begin{itemize}
        \setlength\itemsep{0.1mm}
    	\item $Q$ is a finite set of states
    	\item $A$ is a finite set of states and actions, and $A(q)$ denotes the actions available at state $q\in Q$,
        \item $\Gamma = \{\Gamma_{q,a} : {q\in Q, a\in A(q)}\}$, where 
        $\Gamma_{q,a} \subseteq \mathcal{P}(Q) $ is the set of
        transition probability distributions for state-action pair $(q,a) \in Q\times A(q)$,
        \footnote{Notice that for two distinct state-action pairs, the respective sets of transition probability distributions of $\M$ are independent. This is the well-known \textit{rectangular} property of robust MDPs \cite{nilim2005robust,wiesemann2013robust}.}
        \item $q_0 \in Q$ is initial state,
        \item $AP$ is a finite set of atomic propositions
        \item $L:Q\longrightarrow 2^{AP}$ is the labeling function.
    \end{itemize}
\end{definition}
A finite \emph{path} of RMDP $\M$ is a sequence of states $\pathrmdp = q_0 \xrightarrow{a_0} q_1 \xrightarrow{a_1} \ldots 
\xrightarrow{a_{K-1}} q_K$ such that $a_k \in A(q_k)$ and there 
exists $\gamma\in\Gamma_{q_k,a_k}$ with $ \gamma(q_{k+1})> 0$ for all $k \in \{0, \dots, K-1\}$. We denote the set of all paths of finite length by $\Pathrmdp$. Given a path $\pathrmdp\in\Pathrmdp$, we let $\pathrmdp(k) = q_k$ be the state of $\pathrmdp$ at time $k \in \{0,\dots, K\}$, and $\last(\pathrmdp)$ be its last state. 
A labelled interval-valued MDP (IMDP) $\I$, also called bounded-parameter MDP (BMDP) \cite{givan2000bounded}, is an RMDP $\I = (Q, A,\Gamma,q_0,AP,L)$ where the transition probabilities lie within independent intervals. That is, 
$\Gamma_{q,a} = \{ \gamma\in\mathcal{P}(Q): \underline P(q,a,q')\leq \gamma(q')\leq \overline P(q,a,q')\:\text{for all}\: q,q'\in Q,a\in A(q)\}.$
%

%
Given a finite path of an RMDP or IMDP, a \emph{strategy}  $\policy$ chooses the next action. 
 Formally, $\policy: \Pathrmdp \rightarrow A$ is a function that assigns an action $a \in A$ to each finite path $\pathrmdp\in\Pathrmdp$ of $\M$. 
Any given strategy $\policy\in\Setofpolicies$ restricts the set of feasible transition probability distributions of the RMDP, thus yielding a set of Markov chains, or robust Markov chain. To further reduce this set to a single Markov chain, we define the adversary 
function~\cite{givan2000bounded} 
that chooses the distribution of the next state.
Formally, an \emph{adversary} is a function $\adversary: \Pathrmdp \times A \rightarrow \mathcal{P}(Q)$ that maps each finite path $\pathrmdp \in \Pathrmdp$ and action $a \in A(\last(\pathrmdp))$ to an admissible distribution $\gamma\in\Gamma_{q,a}$. The set of all adversaries is denoted by $\Adversary$.
 Given an initial condition $q_0\in Q$, a strategy $\policy\in\Setofpolicies$ and an  adversary $\xi\in\Xi$, the RMDP collapses to a Markov chain with a unique probability distribution $Pr_\xi^{q_0,\policy}$ on its paths.
 
\section{Robust MDP Abstraction}

Our approach is based on finite abstraction of Process~\eqref{eq:system} to an RMDP via ambiguity set learning. First, in Section \ref{sec:learning} we show how we can build an ambiguity set that contains $P_w$ with high confidence, and then, in the rest of the Section, we show how to build an RMDP abstraction \newline $\M = (Q,A,\Gamma,AP, q_0, L)$ that formally accounts for this uncertainty.

\subsection{Learning the Unknown Probability Distribution}
\label{sec:learning}
In this section, we make use of the samples of $\w$ 
to obtain a formal representation of the unknown distribution $P_w$. Specifically, we learn an ambiguity set, i.e., a set of probability distributions that is guaranteed to contain $P_w$ with high confidence. In particular, Lemma \ref{lemma:ambiguity_set} below allows us to build an ambiguity set of distributions centered on the empirical distribution generated by the samples.
%
\begin{lemma}[Ambiguity set]
\label{lemma:ambiguity_set}
    Given $s\in \mathbb N$, $l\in[1,\infty]$, and $N$ i.i.d. samples $\w^{(1)},\ldots,\w^{(N)}$ from $P_{w}$, define their \emph{empirical distribution} as $\widehat P_w :=  \frac{1}{N}\sum_{i=1}^N\delta_{\w^{(i)}}$. Further, let $\phi$ be the diameter of $W$ in the $\infty$-norm, $\beta\in (0,1)$, and
    $g(N,\phi,d,s,l)$ be the upper bound on the expected transport cost  $\mathbb{E}[\mathcal{T}_s^{(l)}(P_w,\widehat P_w)]$ between $P_w$ and $\widehat P_w$ given in \cite{fournier2022convergence}. 
    Then the Wasserstein ball
    \begin{align}
    \label{eq:wass_ball}
        \mathcal{D} := \{ P \in \mathcal{P}(W) : \mathcal{W}_s(P, 
        \widehat P_w) \le \varepsilon (N,\beta) \},
    \end{align}
    where
    %
    %
    \begin{align*}
        \varepsilon(N,\beta) := g(N,\phi,d,s,l)^{\frac{1}{s}} 
    + \phi\sqrt{d}(2\log{1/\beta})^{\frac{1}{2s}}N^{-\frac{1}{2s}},
    \end{align*}
    contains $P_w$ with confidence $1-\beta$.
\end{lemma}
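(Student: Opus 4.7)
The plan is to combine an in-expectation bound on the empirical Wasserstein distance with a concentration-of-measure argument that turns the bound into a high-confidence statement.

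First, I would invoke the in-expectation bound from \cite{fournier2022convergence}, which says that $\mathbb{E}[\mathcal{T}_s^{(l)}(P_w,\widehat P_w)] \le g(N,\phi,d,s,l)$. Since the map $t\mapsto t^{1/s}$ is concave for $s\ge 1$, Jensen's inequality gives $\mathbb{E}[\mathcal{W}_s(P_w,\widehat P_w)] = \mathbb{E}[(\mathcal{T}_s^{(l)}(P_w,\widehat P_w))^{1/s}] \le g(N,\phi,d,s,l)^{1/s}$, which accounts for the first summand of $\varepsilon(N,\beta)$.

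Second, to turn this in-expectation bound into a high-confidence one, I would apply McDiarmid's bounded-differences inequality to the map $h(w^{(1)},\dots,w^{(N)}) := \mathcal{W}_s(P_w,\widehat P_w)$ viewed as a function of the $N$ i.i.d.\ samples. The key estimate is that if we swap one sample $w^{(i)}$ for another $\tilde w^{(i)}\in W$, producing a modified empirical distribution $\widetilde P_w$, the triangle inequality for $\mathcal{W}_s$ together with the explicit coupling that matches the unchanged $N-1$ atoms and pairs $w^{(i)}$ with $\tilde w^{(i)}$ yields $|h-h'| \le \mathcal{W}_s(\widehat P_w,\widetilde P_w) \le N^{-1/s}\|w^{(i)}-\tilde w^{(i)}\|_l$. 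Bounding the $\ell_l$-distance between two points of $W$ by a dimensional factor times the $\ell_\infty$-diameter $\phi$ (i.e., using $\|\cdot\|_l \le \sqrt{d}\,\|\cdot\|_\infty$ in the relevant regime), I obtain bounded-differences constants of order $\phi\sqrt{d}\,N^{-1/s}$.

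Third, I would plug these constants into the one-sided McDiarmid tail bound $\Pr(h - \mathbb{E} h \ge t) \le \exp(-2t^2/\sum_i c_i^2)$ and solve for the threshold $t$ at which the right-hand side equals $\beta$. Combining this threshold with the expectation bound from the first step then implies $\mathcal{W}_s(P_w,\widehat P_w) \le \varepsilon(N,\beta)$ with probability at least $1-\beta$, which is exactly the event $P_w\in\mathcal{D}$.

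The main obstacle I foresee is calibrating the concentration step so that its $N$-dependence matches the stated $N^{-1/(2s)}$ exponent: a direct McDiarmid application to $\mathcal{W}_s$ with the constants above yields a rate proportional to $N^{(s-2)/(2s)}$, which is sharp only for $s=1$. To recover the claimed rate for general $s$, I would instead apply McDiarmid to $\mathcal{T}_s^{(l)} = \mathcal{W}_s^s$ directly, whose bounded-differences constants from the same coupling are of order $(\phi\sqrt d)^{s}/N$, and then use the subadditivity $(a+b)^{1/s}\le a^{1/s}+b^{1/s}$ (valid for $s\ge 1$) upon taking the $s$-th root to cleanly split the expectation term $g^{1/s}$ and the concentration term $\phi\sqrt{d}(2\log(1/\beta))^{1/(2s)}N^{-1/(2s)}$.
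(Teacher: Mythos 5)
Your proposal is correct, and its first half --- the in-expectation bound $\mathbb{E}[\mathcal{T}_s^{(l)}(P_w,\widehat P_w)]\le g$ from \cite{fournier2022convergence} combined with Jensen's inequality to get $\mathbb{E}[\mathcal{W}_s(P_w,\widehat P_w)]\le g^{1/s}$ --- is exactly the paper's argument. Where you genuinely diverge is the concentration step: the paper does not prove concentration at all, but directly cites a known concentration-of-measure inequality for $\mathcal{W}_s$ around its mean, $\mathbb{P}\big(\mathcal{W}_s(P_w,\widehat P_w)\ge \mathbb{E}[\mathcal{W}_s(P_w,\widehat P_w)]+\tau\big)\le e^{-N\tau^{2s}/(2\phi^{2s})}$ \cite{boissard2014mean} (following \cite{boskos2023high}), and then solves $e^{-N\tau^{2s}/(2\phi^{2s})}=\beta$ for $\tau$, which immediately yields the $N^{-1/(2s)}$ rate. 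You instead rederive the concentration from first principles via McDiarmid, and you correctly diagnose the pitfall that a direct application to $\mathcal{W}_s$ gives the wrong rate for $s>1$; your fix --- apply McDiarmid to $\mathcal{T}_s^{(l)}=\mathcal{W}_s^s$, whose bounded-differences constants are of order $(\phi\sqrt{d})^s/N$, then take $s$-th roots using $(a+b)^{1/s}\le a^{1/s}+b^{1/s}$ --- is sound, more self-contained, and in fact gives a slightly \emph{smaller} radius ($(\log(1/\beta)/2)^{1/(2s)}$ in place of $(2\log(1/\beta))^{1/(2s)}$), so the stated ball a fortiori contains $P_w$ with confidence $1-\beta$. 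One point to tighten: the bounded-differences constant for $\mathcal{T}_s$ does not follow from the triangle inequality applied to ``the same coupling,'' because $\mathcal{T}_s$ is not a metric for $s>1$ (only $\mathcal{W}_s$ satisfies the triangle inequality). Justify it instead by taking an optimal coupling for $(P_w,\widetilde P_w)$ and rerouting the $1/N$ mass destined for the swapped atom to its replacement, which perturbs the transport cost by at most $\tfrac{1}{N}\sup_{x,y,y'\in W}\big|\,\|x-y\|_l^s-\|x-y'\|_l^s\big|\le (\phi\sqrt d)^s/N$; with that repair your route is a valid alternative proof.
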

\begin{proof}
%
%
\textcolor{black}{
In analogy to the proof of \cite[Proposition 24 of arXiv version]{boskos2023high}, 
we leverage the fact that $\mathcal{W}_s(P_w,\widehat P_w)$ concentrates around its mean \cite{boissard2014mean}:}
%
\begin{align*}
    \mathbb{P}\big( \mathcal{W}_s^{(l)}(P_w,\widehat P_w) \ge \mathbb{E}[\mathcal{W}_s^{(l)}(P_w,\widehat P_w)] + \tau \big) \le e^{-N\tau^{2s}/(2\phi^{2s})},\:\forall \tau \ge 0.
\end{align*}

%
By the definition of the $s$-Wasserstein distance and Jensen's inequality,
\begin{align*}
    \mathbb{E}[\mathcal{W}_s^{(l)}(P_w,\widehat P_w)] := \mathbb{E}[\big(\mathcal{T}_s^{(l)}(P_w,\widehat P_w)\big)^{1/s}] \le \mathbb{E}[\mathcal{T}_s^{(l)}(P_w,\widehat P_w)]^{1/s},
\end{align*}
and therefore
\begin{align*}
    \mathbb{P}\big( \mathcal{W}_s^{(l)}(P_w,\widehat P_w) \ge \mathbb{E}[\mathcal{T}_s^{(l)}(P_w,\widehat P_w)]^{1/s} + \tau \big) \le e^{-N\tau^{2s}/(2\phi^{2s})},\: \forall \tau \ge 0.
\end{align*}
Selecting $\tau$ such that  $\beta= e^{-N\tau^{2s}/(2\phi^{2s})}$ and setting $\varepsilon := \mathbb{E}[\mathcal{T}_s^{(l)}(P_w,\widehat P_w)]^{1/s} + \tau$, we arrive at the desired result.
\end{proof}

{It is important to stress that in all cases $g\to 0$ as $N$ increases. Thus, the radius of the ambiguity set $\mathcal{D}$ shrinks with more data. }

\subsection{States, Actions and Labeling Function of the Abstraction}
\label{sec:states_actions_label}

The state-space of the abstraction is obtained as follows. First, we partition set $X$ into the set of non-empty, non-overlapping regions $Q_{\safe} := \{q_1,\dots,q_{|Q_{\safe}|}\}$, i.e., $q\cap q' = \emptyset$ for all $q\neq q'\in Q_{\safe}$ and $\cup_{q\in Q_{\safe}}q = X$. Furthermore, this partition must respect the regions of interest, i.e., for every $r \in R$ and $q\in Q_{\safe}$, 
$q\cap r \in \{\emptyset, q\}$.
The state space is then defined by associating each state of the abstraction with a region in $Q_{\safe}$ and the additional ``unsafe" region $q_u = \reals^n \setminus X$, i.e., $Q := Q_{\safe}\cup\{q_u\}$. With a small abuse of notation, we let $q\in Q$ denote at the same time a state of the abstraction and a region in $\reals^n$. We also let the actions of the abstraction $A$ be the same as those of Process \eqref{eq:system}, with $A(q) := A$ for all $q \in Q$.
Furthermore, with an additional abuse of notation, we define the abstraction labeling function as $L:Q\rightarrow 2^{AP}$ such that $L(q) := L(x)$ if $x\in q$ for all $q\in Q$. 
Note that we do not fix the initial state, since our approach yields results for every $q_0 \in Q$.

\subsection{Transition Probability Distributions of the Abstraction}

\subsubsection{Accounting for the discretization error}
\label{sec:imdp_abstraction}

To embed the discretization error into the abstraction, we first abstract Process~\eqref{eq:system} to an IMDP $\I := (Q, A, \widehat\Gamma,q_0, AP, L)$, namely the ``empirical IMDP'', under the assumption that $\w_k$ is distributed according to the empirical distribution $\widehat P_w$. Components $Q$, $A$, $q_0$, $AP$, and $L$ of $\I$ are as defined above. The following proposition defines the transition probability intervals $\widehat\Gamma$ of $\I$.

\begin{proposition}
    Let $T_{\widehat P_w}^u$ be the transition kernel of Process~\eqref{eq:system} when $\w_t\sim\widehat P_w$, i.e.,
    %
    %
    for $B\in\mathcal{B}(\reals^n)$, $x\in X$, and $a\in A$, $T_{\widehat P_w}^a(B\mid x) = \int_W \mathds{1}_{B}(f_a(x,v))\widehat P_w(dv)$.
    Further, define $\text{Reach}(q,a,\w^{(i)}) := \{f_a(x,\w^{(i)}) \in \reals^n : x \in q\}$.
    Then, for every $q\in Q_{\safe}$, $q'\in Q$, and $a\in A$,
    \begin{subequations}\label{eq:transition_probability_bounds}
        \begin{align}
            \min_{x\in q} T_{\widehat P_w}^a(q' \mid x) &\ge \frac{1}{N} |\{i\in\{1,\ldots,N\}:\text{Reach}(q,a, \w^{(i)})\subseteq q'\}| \label{lower:bound}\\
            \max_{x\in q} T_{\widehat P_w}^a(q' \mid x) &\le \frac{1}{N} |\{i\in\{1,\ldots,N\}:\text{Reach}(q,a, \w^{(i)})\cap q' \neq \emptyset\}|. \label{upper:bound}
        \end{align}
    \end{subequations}
\end{proposition}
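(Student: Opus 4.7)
The plan is to exploit the structure of the empirical distribution $\widehat P_w = \frac{1}{N}\sum_{i=1}^{N}\delta_{w^{(i)}}$, which reduces the integral defining $T_{\widehat P_w}^{a}(q'\mid x)$ to a finite sum of indicators. Specifically, I would first observe that for any $x\in q$,
\begin{equation*}
T_{\widehat P_w}^{a}(q'\mid x) \;=\; \int_W \mathds{1}_{q'}\bigl(f_a(x,v)\bigr)\,\widehat P_w(dv) \;=\; \frac{1}{N}\sum_{i=1}^{N}\mathds{1}_{q'}\bigl(f_a(x,w^{(i)})\bigr).
\end{equation*}
The entire proof then reduces to bounding $\min_{x\in q}$ and $\max_{x\in q}$ of this finite sum using the definition of $\mathrm{Reach}(q,a,w^{(i)}) = \{f_a(x,w^{(i)}) : x\in q\}$.

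For the lower bound \eqref{lower:bound}, I would argue termwise: if $\mathrm{Reach}(q,a,w^{(i)})\subseteq q'$, then by definition $f_a(x,w^{(i)})\in q'$ for every $x\in q$, so the $i$-th indicator is identically $1$ on $q$. Dropping the remaining (nonnegative) indicators and keeping only those indices $i$ in the set $\{i : \mathrm{Reach}(q,a,w^{(i)})\subseteq q'\}$ yields a lower bound on the sum that is uniform in $x$, hence a lower bound on $\min_{x\in q} T_{\widehat P_w}^{a}(q'\mid x)$. For the upper bound \eqref{upper:bound}, I would push the maximum inside the finite sum,
\begin{equation*}
\max_{x\in q}\sum_{i=1}^{N}\mathds{1}_{q'}\bigl(f_a(x,w^{(i)})\bigr) \;\le\; \sum_{i=1}^{N}\max_{x\in q}\mathds{1}_{q'}\bigl(f_a(x,w^{(i)})\bigr),
\end{equation*}
and note that $\max_{x\in q}\mathds{1}_{q'}(f_a(x,w^{(i)}))$ equals $1$ exactly when some $x\in q$ maps into $q'$ under $f_a(\cdot,w^{(i)})$, i.e., when $\mathrm{Reach}(q,a,w^{(i)})\cap q'\neq \emptyset$, and $0$ otherwise. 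Counting such indices and dividing by $N$ yields the claimed bound.

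There is no real obstacle here: the argument is essentially a bookkeeping exercise once the integral is rewritten as an average of indicators, and the two bounds correspond to the intuitive ``inner'' and ``outer'' discretizations of the reachable set under the sampled noise. The only subtlety worth flagging is measurability of $\mathrm{Reach}(q,a,w^{(i)})\cap q'$ (so that the classifying condition on each index is well-defined), but this follows immediately from continuity of $f_a$ in $x$ and the Borel measurability of $q'$, so the reach sets are analytic and the intersection/inclusion conditions are Borel-decidable for each sample. No concentration or probabilistic argument is required, since Lemma~\ref{lemma:ambiguity_set} has already absorbed the epistemic uncertainty about $P_w$; here we only need to verify that the empirical kernel admits the discrete interval abstraction stated.
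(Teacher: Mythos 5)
Your proof is correct and follows essentially the same route as the paper: rewrite $T_{\widehat P_w}^{a}(q'\mid x)$ as the average $\frac{1}{N}\sum_i \mathds{1}_{q'}(f_a(x,\w^{(i)}))$, then bound the max by pushing it inside the sum (intersection condition) and the min termwise (inclusion condition). No gaps; the measurability aside is unnecessary but harmless.
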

\begin{proof}
    Let $x\in X$, $a\in A$. Then,
from 
the definition of the  empirical distribution, we have
%
$T_{\widehat P_w}^a(q'\mid x) 
    = \int_{W}\mathds{1}_{q'}(f_a(x,v))\widehat P_w(dv)
    = \frac{1}{N}\sum_{i=1}^N\mathds{1}_{q'}(f_a(x,\w^{(i)}))$
for all $q'\in Q$. 
For all $x\in q$, $q\in Q_{\safe}$, the previous probability is upper bounded by
$
    T_{\widehat P_w}^a(q'\mid x)\le \frac{1}{N}\sum_{i=1}^N\max_{x\in q}\mathds{1}_{q'}(f_a(x,\w^{(i)}))= \frac{1}{N} |\{i\in\{1,\ldots,N\}:\text{Reach}(q,a,\w^{(i)})\cap q' \neq \emptyset\}|,
$
%
%
which yields \eqref{upper:bound}. Analogously, we obtain the lower bound~\eqref{lower:bound},
%
%
which completes the proof.
\end{proof}
%
%
We complete the abstraction $\I$ by defining the transition probability bounds $\underline P(q,a,q'), \overline P(q,a,q')$ of $\I$ as the bounds in \eqref{eq:transition_probability_bounds} for all $q\in Q_{\safe}$, $q'\in Q$, and $a\in A$. 
\textcolor{black}{In this paper, following the approach in \cite{adams2022formal}, we over-approximate the reachable sets by using affine relaxations of $f$, which yields sound and tight transition probabilities.}
To further ensure that the paths of $\I$ that exit $Q_{\safe}$ do not satisfy the specifications, we make $q_u$ absorbing by setting $\underline P(q_u,a,q_u) = \overline P(q_u,a,q_u) = 1$ for all $a\in A$. These bounds define the set of transition probability distributions of $\I$: for all $q\in Q$, $a\in A$,
\begin{align}
\label{eq:Gamma_empirical}
    \widehat\Gamma_{q,a} = \{ \gamma\in\mathcal{P}(Q): \underline P(q,a,q')\leq \gamma(q')\leq \overline P(q,a,q')\:\text{for all}\: q'\in Q\}.
\end{align}


\subsubsection{Accounting for the Distributional Ambiguity}
\label{sec:rmdp_abstraction}

In section \ref{sec:imdp_abstraction}, we obtain an abstraction of Process~\eqref{eq:system} when the disturbance is distributed according to the center of the ambiguity set $\mathcal{D}$ in \eqref{eq:wass_ball}, i.e., $\widehat P_w$. To obtain a sound abstraction of Process~\eqref{eq:system}, we now need to account for all the distributions in said ambiguity set $\mathcal{D}$. To this end, we take $\I$ as a starting point and expand its set of transition probabilities according to $\mathcal{D}$, which yields the RMDP $\M$. However, since we consider nonlinear disturbances, we must first translate the distributional ambiguity in $\mathcal{P}(W)$ to distributional ambiguity on $\mathcal{P}(\mathbb R^n)$. Then, we leverage the results from optimal control of RMDPs \cite{gracia2023distributionally,gracia2022distributionally} to conclude correctness of the abstraction. 

Define the cost $c:Q^2\longrightarrow\reals_{\ge 0}$ such that
\begin{align*}
c(q,q') := \inf\{\|x-x'\|_l^s : x\in q, x'\in q'\}
\end{align*}
for all $q,q'\in Q$.  The sets of transition probability distributions of the RMDP abstraction are defined as follows:
\begin{definition}[Transition Probability Distributions of the RMDP Abstraction]
    \label{def:RMDP-abstraction}
    Consider the cost $c$ and the set $\widehat\Gamma$ of $\I$ as defined in \eqref{eq:Gamma_empirical}. Let $L_w(q,a) = \max_{x\in q} L_w(x,a)$ for all $q\in Q_{\safe}$, $a\in A$ and $\varepsilon(N, \beta)$ be as in Lemma~\ref{lemma:ambiguity_set} with $\beta\in (0,1)$. We define the sets of transition probability distributions of $\M$ as
    \begin{align*}
        \Gamma_{q,a} := \{\gamma\in\mathcal{P}(Q) : \exists\:\widehat\gamma \in\widehat\Gamma_{q,a} \: \text{s.t.}\: \mathcal{T}^c(\gamma,\widehat\gamma)\le (L_w(q,a)\varepsilon(\beta,N))^s\},    
    \end{align*}
    for all $q\in Q_{\safe}$, $a\in A$, and $\Gamma_{q_u,a} = \widehat\Gamma_{q_u,a}$ for all $a\in A$, so the unsafe state remains absorbing.\footnote{We can obtain a tighter representation of the uncertainty in $\Gamma_{q,a}$ by restricting ourselves to distributions in $\mathcal{P}(Q_{q,a})$, where $Q_{q,a} := \{q'\in Q : \exists x\in q,\:w\in W\:\text{s.t.}\: q' \cap Reach(x,a,w)\neq\emptyset\}$, i.e., leveraging boundedness of $W$.}
\end{definition}
The following proposition guarantees that the RMDP constructed per Def.~\ref{def:RMDP-abstraction} is a sound abstraction of Process~\eqref{eq:system}, i.e., that every $P\in\D$, when propagated through $f$, has a discrete equivalent in $\Gamma$.
%
%
\begin{proposition}[Soundness of the abstraction]
Consider the RMDP $\M = (Q,A,\Gamma, q_0, AP, L)$ as described in this section. Define, for all $x\in X$, $a\in A$, the distributions $\gamma_{x,a}\in\mathcal{P}(Q)$ such that $\gamma_{x,a}(q') := T^a(q' \mid x)$ for all $q'\in Q$. Then it holds that
\begin{align*}
    \mathbb P (
\gamma_{x,a}\in\Gamma_{q,a}\;\;\forall x\in q,\;\;\forall q\in Q,\;\;\forall a\in A ) \ge 1-\beta. 
\end{align*}
%
\end{proposition}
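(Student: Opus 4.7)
The plan is to show that the high-probability event guaranteed by Lemma~\ref{lemma:ambiguity_set} is exactly the event on which the containment claim holds deterministically for every $x$, $q$, $a$. By that lemma, with probability at least $1-\beta$ the samples satisfy $\mathcal{T}_s^{(l)}(P_w,\widehat P_w)\le \varepsilon(N,\beta)^s$. I would then fix a sample realization on this event and verify, for every $q\in Q_{\safe}$, every $a\in A$, and every $x\in q$, that $\gamma_{x,a}\in\Gamma_{q,a}$ in the sense of Definition~\ref{def:RMDP-abstraction}. (The case $q=q_u$ is vacuous because $\gamma_{x,a}$ is defined only for $x\in X$.)

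The first step is to exhibit a concrete candidate $\widehat\gamma\in\widehat\Gamma_{q,a}$. Given $x\in q\subseteq X$ and $a\in A$, define $\widehat\gamma_{x,a}(q'):=T^{a}_{\widehat P_w}(q'\mid x)$. The bounds in \eqref{eq:transition_probability_bounds} applied pointwise at this particular $x$ show that $\underline P(q,a,q')\le \widehat\gamma_{x,a}(q')\le \overline P(q,a,q')$, so that $\widehat\gamma_{x,a}\in\widehat\Gamma_{q,a}$ by \eqref{eq:Gamma_empirical}.

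The second step is to bound $\mathcal{T}^{c}(\gamma_{x,a},\widehat\gamma_{x,a})$. Let $\pi^{\ast}\in\Pi(P_w,\widehat P_w)$ be an optimal coupling realising $\mathcal{T}_s^{(l)}(P_w,\widehat P_w)$, and define the measurable map $G_{x,a}:W\times W\to Q\times Q$ by $G_{x,a}(w,w'):=(\kappa(f_a(x,w)),\kappa(f_a(x,w')))$, where $\kappa:\reals^n\to Q$ assigns every point to its partition cell. Its pushforward $\tilde\pi:=(G_{x,a})_{\ast}\pi^{\ast}$ is a coupling of $\gamma_{x,a}$ and $\widehat\gamma_{x,a}$. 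Using the definition of $c$ and Lipschitz continuity of $f_a$ in $w$,
\begin{align*}
c(\kappa(f_a(x,w)),\kappa(f_a(x,w')))
&\le \|f_a(x,w)-f_a(x,w')\|_l^{s}\\
&\le L_w(x,a)^{s}\|w-w'\|_l^{s},
\end{align*}
so integrating against $\pi^{\ast}$ yields
\begin{align*}
\mathcal{T}^{c}(\gamma_{x,a},\widehat\gamma_{x,a})
\le L_w(x,a)^{s}\,\mathcal{T}_s^{(l)}(P_w,\widehat P_w)
\le \bigl(L_w(q,a)\varepsilon(N,\beta)\bigr)^{s},
\end{align*}
where the last inequality uses $L_w(q,a)=\max_{x\in q}L_w(x,a)$ together with the high-probability event. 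Hence $\gamma_{x,a}\in\Gamma_{q,a}$.

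The main obstacle I anticipate is step two: one must be careful that the coupling on $W\times W$ pushes forward to a \emph{legitimate} coupling on $Q\times Q$ with the correct marginals $\gamma_{x,a}$ and $\widehat\gamma_{x,a}$, and that the induced bound uses the cell-based cost $c$ rather than the ambient norm. Once the coupling is set up and Lipschitz continuity of $f_a(x,\cdot)$ is invoked, the remaining steps are bookkeeping; the uniformity over $x\in q$ is handled automatically because $\widehat\gamma_{x,a}$ and the Lipschitz bound $L_w(q,a)$ depend on $x$ only through its cell.
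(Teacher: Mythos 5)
Your proof is correct and follows the same overall structure as the paper's: exhibit $\widehat\gamma_{x,a}(q')=T^a_{\widehat P_w}(q'\mid x)$ as the required witness in $\widehat\Gamma_{q,a}$, then bound $\mathcal{T}^c(\gamma_{x,a},\widehat\gamma_{x,a})$ by $(L_w(q,a)\varepsilon(N,\beta))^s$ on the event $\mathcal{W}_s(P_w,\widehat P_w)\le\varepsilon(N,\beta)$, which Lemma~\ref{lemma:ambiguity_set} guarantees with probability $1-\beta$ and which is independent of $x$, $q$, $a$. The difference lies in how the key transport inequality is established. The paper factors it into two cited results: first the standard Lipschitz push-forward bound $\mathcal{W}_s(f_a(x,\cdot)_\#P_w,f_a(x,\cdot)_\#\widehat P_w)\le L_w(x,a)\epsilon$ from \cite{villani2021topics}, and then Lemma~2 of \cite{gracia2023distributionally} to pass from the Wasserstein distance between the continuous push-forwards on $\reals^n$ to the cell-based cost $\mathcal{T}^c$ on $Q$. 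You instead build a single coupling directly from $W\times W$ to $Q\times Q$ by pushing an optimal plan $\pi^*\in\Pi(P_w,\widehat P_w)$ through $(w,w')\mapsto(\kappa(f_a(x,w)),\kappa(f_a(x,w')))$, checking its marginals, and using $c(\kappa(y),\kappa(y'))\le\|y-y'\|_l^s$ (valid since $c$ is an infimum over the cells and $y$, $y'$ lie in their own cells) together with Lipschitz continuity of $f_a(x,\cdot)$. This buys a self-contained argument that needs no external lemmas, and your explicit handling of the vacuous case $q=q_u$ and of the uniformity over $x\in q$ is slightly more careful than the paper's. The only cosmetic caveat is that the paper states the Lipschitz property of $f_a$ in an unspecified norm while you apply it in the $l$-norm used for the transport cost; the paper's own proof makes the same implicit compatibility assumption, so this is not a gap relative to the paper.
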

\begin{proof}
Let $x\in \reals^n$, $a\in A$ 
denote by $f_a(x,\cdot)_\#P_w$ the \emph{push-forward} measure of $P_w$ through $f_a(x,\cdot)$.
Start by assuming that $P_w$ and $\widehat P_w$ are $\epsilon$-close in the sense of $\mathcal{W}_s$. 
Then it holds that \cite{villani2021topics}
\begin{align*}
    \mathcal{W}_s(T^a(\cdot\mid x),T_{\widehat P_w}^a(\cdot\mid x)) = \mathcal{W}_s(f_a(x,\cdot)_\#P_w,f_a(x,\cdot)_\#\widehat P_w) \le L_w(x,a)\epsilon.
\end{align*}
Next, define the distribution $\widehat\gamma_{x,a}\in\mathcal{P}(Q)$ with $\widehat\gamma_{x,a}(q') := T_{\widehat P_w}^a(q' \mid x)$ for all $q'\in Q$. By the construction of $\widehat\Gamma_{q,a}$, $\widehat\gamma_{x,a} \in \widehat\Gamma_{q,a}$. Furthermore, from \cite[Lemma $2$]{gracia2023distributionally},
 \begin{align*}
 \mathcal{W}_s(T^a(\cdot\mid x),T_{\widehat P_w}^a(\cdot\mid x)) \le L_w(x,a)\epsilon \implies \mathcal{T}^c(\gamma_{x,a},\widehat\gamma_{x,a})^{1/s}\le L_w(q,a)\epsilon.
\end{align*}
%
By Lemma~\ref{lemma:ambiguity_set}, when $\epsilon \equiv \varepsilon(N, \beta)$, this conclusion holds with probability $1-\beta$ for all $x\in q$, $q\in Q$, $a\in A$. This in turn implies  by Definition~\ref{def:RMDP-abstraction} that also  $\gamma_{x,a} \in \Gamma_{q,a}$ for all $x\in q$, $q\in Q$, $a\in A$ with the same probability.
\end{proof}


\begin{remark}[Clustering Samples]
    \textcolor{black}{As shown in Sections~\ref{sec:learning} and \ref{sec:imdp_abstraction}, while the computational complexity of the abstraction is proportional to the number of samples, the ambiguity in the learned distribution and therefore the conservatism of the abstraction decrease with $N$. A straightforward way to deal with this trade-off}
    is to reduce the number of samples used to construct the abstraction via clustering. By doing so, we obtain another discrete distribution $\widetilde P_w$, supported on $N_{cluster} \ll N$ points, and which is very close (in the Wasserstein sense) to $\widehat P_w$. By adding this small discrepancy to the original ambiguity radius and using the triangle inequality, we construct an ambiguity ball around $\widetilde P_w$ that retains the exact same guarantees of containing $P_w$ as our original ambiguity set. 
\end{remark}

\begin{remark}
    Notice that the confidence on $\M$ being a sound abstraction of Process~\eqref{eq:system} is the same as that of our ambiguity set containing $P_w$. This is an advantage of using an ambiguity set to represent learning uncertainty, instead of obtaining bounds on individual transition probabilities, as is commonly done in the literature, e.g., \cite{badings2023probabilities, badings2023robust, ashok2019pac}. In fact, these approaches lead the resulting confidence bounds to decrease with the abstraction size and to the paradox that large abstractions can be more uncertain than coarser ones.
\end{remark}

%

\section{Strategy Synthesis}
\label{sec:strategy_synthesis}

To obtain a strategy for Process \eqref{eq:system},  we compute a strategy $\sigma^*$ for the abstraction $\M$ that maximizes the probability of satisfying $\varphi$ and is robust against all uncertainties in the abstraction. Since $\M$ is a sound abstraction of \eqref{eq:system}, $\sigma^*$ can be refined to a strategy on \eqref{eq:system} with correctness guarantees. 
To synthesize such $\sigma^*$,  we first construct a deterministic finite automaton (DFA) that represents formula $\varphi$ \cite{de2013linear}.

\begin{definition}[DFA]
    Given an \LTLf formula $\varphi$ defined over a set of atomic propositions $AP$, a \emph{deterministic finite automaton} (DFA) constructed from $\varphi$ is a tuple $\A = (Z,2^{AP},\delta,z_0,Z_F)$ where
    \begin{itemize}
        \item $Z$ is a finite set of states,
        \item $2^{AP}$ is a finite set of input symbols,
        \item $\delta:Z\times 2^{AP} \longrightarrow Z$ is the transition function,
        \item $z_0\in Z$ is the initial state,
        \item $Z_F\subseteq Z$ is the set of accepting states.
    \end{itemize}
\end{definition}

A trace $\rho = \rho_0\rho_1 \dots \rho_{K} \in (2^{AP})^*$ induces a run $z = z_0z_1 \dots z_{K+1}$ on $\A$, where $z_{k+1} = \delta(z_k,\rho_{k})$ for all $k\in \{0,\dots,K\}$. Per construction of $\A$ introduced by \cite{de2013linear}, trace $\rho$ satisfies $\varphi$ iff $z_{K+1} \in Z_F$, in which case run $z$ is called \emph{accepting} for $\A$.
%
%
Next, we generate the product of RMDP abstraction $\M$ and $\A$ to capture the paths of $\M$ that induce accepting runs in $\A$.

\begin{definition}[Product RMDP]
    Given RMDP $\M$ and DFA $\A$, the product $\PRMDP = \M\otimes\A$ is another RMDP $\PRMDP = (Q^\varphi,A^\varphi,\Gamma^\varphi,q_0^\varphi, Q_{F}^\varphi)$, where
   \begin{itemize}
       \item $Q^\varphi = Q\times Z$ is the set of states,
       \item $A^\varphi = A$ is the set of actions, and $A((q,z))$ denotes the actions available at state $(q,z)\in Q^\varphi$,
       \item $\Gamma^\varphi = \{\Gamma_{(q,z),a}^\varphi: {(q,z)\in Q^\varphi, a\in A^\varphi((q,z))}$  is the set of
        transition probability distributions for state-action pair $((q,z),a) \in Q^\varphi \times A((q,z))$, with
       \begin{align*}
            \Gamma_{(q,z),a}^\varphi := \{\gamma^\varphi \in \D(Q^\varphi) : \exists \gamma\in\Gamma_{q,a}\: \text{s.t.}\: \forall q'\in Q, \gamma^\varphi((q',z')) = \gamma(q') \: \text{iff}\: z' = \delta(z,L(q'))\},
       \end{align*}
       \item $q_0^\varphi = \delta(z_0,L(q_0))$ is the initial state,
       \item $Q_F^\varphi = Q\times Z_F$ is the set of accepting states,
   \end{itemize}
\end{definition}

By construction, a path of $\PRMDP$ is accepting
if and only if its projection onto $\M$ is a path $\path$ whose trace $\rho$ satisfies $\varphi$, i.e., $\rho$ is accepting by $\A$. 
Therefore, obtaining a strategy $\policy^*$ that robustly maximizes the probability of $\M$ satisfying $\varphi$ boils down to solving a \emph{robust maximal reachability probability problem} in $\PRMDP$. To this end, since $\Gamma$ and, consequently, $\Gamma^\varphi$, are built extending  the independent intervals of probabilities in the empirical IMDP with ambiguity sets of distributions in the sense of $\mathcal{T}^{c}$, we can rely on the efficient \emph{robust dynamic programming} algorithm for unbounded, i.e., infinite time horizon, reachability in \cite{gracia2022distributionally}. This one involves linear programming, and enjoys polynomial and exponential time complexity in the size of $\M$ and $\varphi$, respectively. The resulting strategy $\sigma^*_\varphi$ of $\PRMDP$
can be mapped to a memory-dependent strategy $\sigma^*$ of $\M$, where the memory corresponds to the current state in $\A$. Besides $\sigma^*$, the synthesis process also yields bounds on the probability of $\M$ satisfying $\varphi$ under strategy $\sigma^*$ from each $q\in Q$, i.e., \; $\underline p(q) := \min_{\xi \in \Gamma} Pr_\xi^{q,\policy^*}[\pathrmdp \models \varphi]$ and
$\overline p(q) := \max_{\xi \in \Gamma} Pr_\xi^{q,\policy^*}[\pathrmdp \models \varphi]$.

\subsection{Correctness}

In this section, we refine the strategy $\policy^*$ to a strategy $\policy_x$ of Process~\eqref{eq:system}, ensuring its correctness. First, let $J:\reals^n\rightarrow Q$ be the function that maps each continuous state to the corresponding region, i.e., for all $q\in Q$, $J(x) := q$ iff $x\in q$. 
Given 
$\pathX = x_0 \xrightarrow{a_0}  \ldots \xrightarrow{a_{K-1}} x_K$, we let $J(\pathX) := J(x_0) \xrightarrow{a_0} \ldots \xrightarrow{a_{K-1}} J(x_K)$ denote the corresponding path of $\M$. A switching strategy of \eqref{eq:system} is obtained as $\policy_x(\pathX) : = \policy^*(J(\pathX))$ 
\;
$\forall \pathX \in \PathX$. 
The following theorem ensures correctness of $\policy_x$.
\begin{thm}[Correctness]
    Let $\M$ be a sound RMDP abstraction of Process~\eqref{eq:system} with confidence $1-\beta$ and $\varphi$ be an \LTLf formula. Let $\policy^*$ be an optimal robust strategy of $\M$ w.r.t. $\varphi$ and $\policy_x$ the corresponding switching strategy of \eqref{eq:system}. Then it holds that, with confidence $1-\beta$,
  \begin{align*}
        P_{x_0}^{\sigma_x}[\pathX \models \varphi \land \Globally \neg \Prop_u] \in [\underline p(q), \overline p(q)],
  \end{align*}
    for all $x_0\in q$, $q\in Q$, where $\underline p(q)$ and $\overline p(q)$ are the bounds on the probability of $\M$ satisfying $\varphi$.
\end{thm}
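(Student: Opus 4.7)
The plan is to reduce the satisfaction probability of the continuous Process~\eqref{eq:system} under $\sigma_x$ to a reachability probability on the product RMDP under a suitably chosen adversary, and then invoke the robust optimality of $\sigma^*$. First, I condition on the confidence event of soundness, which has probability at least $1-\beta$ and on which every one-step distribution $\gamma_{x,a}(\cdot) := T^a(\cdot \mid x)$ of the continuous process belongs to $\Gamma_{q,a}$ for each $x \in q$, $q\in Q$, $a\in A$. Since the partition respects the regions in $R$, $q_u$ is absorbing with label $\Prop_u$, and $\sigma_x(\pathX) = \sigma^*(J(\pathX))$ depends on $\pathX$ only through its projection, the event $\{\pathX \models \varphi \land \Globally \neg \Prop_u\}$ coincides with the event that $J(\pathX)$ is accepting in $\PRMDP$ under $\sigma^*$.

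Next, I define a path-dependent adversary $\xi^* \in \Xi$ that mimics the averaged one-step behavior of the continuous dynamics. For a finite abstraction path $\omega \in \Pathrmdp$ of length $k$ ending in $q_k$, let $\mu_\omega$ denote the conditional distribution of $\x_k$ given $J(\pathX) = \omega$ under $P_{x_0}^{\sigma_x}$, which exists by standard disintegration on Polish spaces. I set
\begin{align*}
\xi^*(\omega, a)(q') := \int_{q_k} \gamma_{x,a}(q')\,\mu_\omega(dx), \qquad q'\in Q.
\end{align*}
For admissibility I need $\xi^*(\omega,a) \in \Gamma_{q_k,a}$, which follows from convexity (and closedness) of $\Gamma_{q_k,a}$: $\widehat\Gamma_{q_k,a}$ is convex as an IMDP-type set, and for $\gamma_1,\gamma_2 \in \Gamma_{q_k,a}$ with witnesses $\widehat\gamma_1,\widehat\gamma_2 \in \widehat\Gamma_{q_k,a}$, joint convexity of $\mathcal{T}^c$ gives $\mathcal{T}^c\bigl(\lambda\gamma_1+(1-\lambda)\gamma_2,\lambda\widehat\gamma_1+(1-\lambda)\widehat\gamma_2\bigr) \le (L_w(q_k,a)\varepsilon(N,\beta))^s$. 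Thus $\xi^*(\omega,a)$, as a closed-convex mixture of admissible $\gamma_{x,a}$'s, is also admissible.

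I then prove by induction on path length that the pushforward of $P_{x_0}^{\sigma_x}$ through $J$ coincides with $Pr_{\xi^*}^{q_0,\sigma^*}$ on all cylinder sets of $\Pathrmdp$. The base case follows from $q_0 = J(x_0)$, and the inductive step uses the tower property: for $a = \sigma^*(\omega)$,
\begin{align*}
P_{x_0}^{\sigma_x}\bigl[J(\pathX)(k+1) = q' \mid J(\pathX)=\omega\bigr] = \int_{q_k} \gamma_{x,a}(q')\,\mu_\omega(dx) = \xi^*(\omega,a)(q').
\end{align*}
Combined with the event equivalence from the first step, this yields $P_{x_0}^{\sigma_x}[\pathX \models \varphi \land \Globally \neg \Prop_u] = Pr_{\xi^*}^{q_0,\sigma^*}[\pathrmdp \models \varphi]$, and since $\xi^* \in \Xi$, the right-hand side lies in $[\underline p(q_0), \overline p(q_0)]$ by the min–max definitions of those bounds.

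The main obstacle is constructing $\xi^*$ and verifying its admissibility: the convexity of $\Gamma_{q,a}$ crucially relies on joint convexity of the optimal transport cost, and the tower-property induction must be handled carefully to confirm that this history-dependent mixture truly reproduces the abstraction-level law of the continuous process. Measurability of $\mu_\omega$ and of $\xi^*$ is a standard consequence of disintegration on Polish spaces.
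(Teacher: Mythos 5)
Your proof is correct, and it is genuinely more self-contained than the paper's, which disposes of the theorem in two sentences by citing Theorem~2 of the IMDP-abstraction paper \cite{jackson2021formal} and "letting the path length grow unbounded." The substance of your argument --- constructing an explicit witness adversary $\xi^*$ by disintegrating the continuous-state law over the discrete history, verifying its admissibility, and matching cylinder probabilities by induction --- is the standard machinery that such a citation implicitly invokes, so the two routes are morally the same reduction. What your version buys is that it makes visible the one structural hypothesis the citation does not obviously cover: the cited result concerns IMDPs, whose transition sets are interval-defined and hence automatically closed under the mixtures that the averaged adversary produces, whereas here $\Gamma_{q,a}$ is a Wasserstein-type enlargement of an IMDP set, and its closure under (integral) convex combinations must be checked separately. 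You correctly identify that this follows from convexity of $\widehat\Gamma_{q,a}$ together with joint convexity of $\mathcal{T}^c$ (glue the optimal couplings and apply Jensen's inequality, using that $Q$ is finite so the relevant functions live on a compact convex polytope), plus a measurable selection of the witnesses $\widehat\gamma_{x,a}$ --- which here are explicitly $T^a_{\widehat P_w}(\cdot\mid x)$ and hence measurable in $x$. Two small points to tighten: (i) the event identification should be phrased as $\{\pathX \models \varphi \land \Globally\neg\Prop_u\}$ being $J$-measurable because the partition respects $R$ (so $L(x)=L(J(x))$ and the traces of $\pathX$ and $J(\pathX)$ agree), with acceptance in $\PRMDP$ then following from the product construction; and (ii) the passage from finite-horizon cylinder agreement to the unbounded-horizon satisfaction probability deserves one line of monotone convergence, which is exactly the "path length grows unbounded" step the paper alludes to.
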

\begin{proof}
    If $\M$ is a sound abstraction of \eqref{eq:system}, then the theorem follows by making use of Theorem $2$ in \cite{jackson2021formal} and letting the path length grow unbounded. Since the soundness argument holds with confidence $1-\beta$, $\policy_x$ is a correct strategy for \eqref{eq:system} with the same confidence.
\end{proof}
\section{Case Studies}
\label{sec:case_studies}

We consider various stochastic systems from the literature and use our framework to synthesize optimal strategies against various \LTLf specifications.
All experiments were run on an Intel Core i7 3.6GHz CPU with 32GB RAM. We consider $\beta=10^{-9}$, $l=\infty$, and $s = 1$ in all experiments. 

\begin{figure}[b]
    \begin{minipage}[b]{.24\linewidth}
        \centering
        \includegraphics[width=\linewidth]{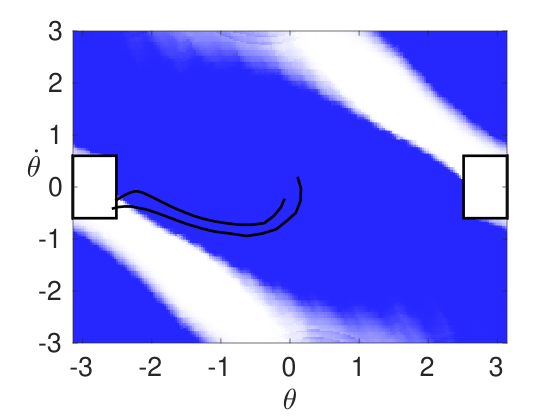}
        \subcaption{$N = 10^4$}
    \end{minipage}
    \hfill
    \begin{minipage}[b]{.24\linewidth}
    \centering
        \includegraphics[width=\linewidth]{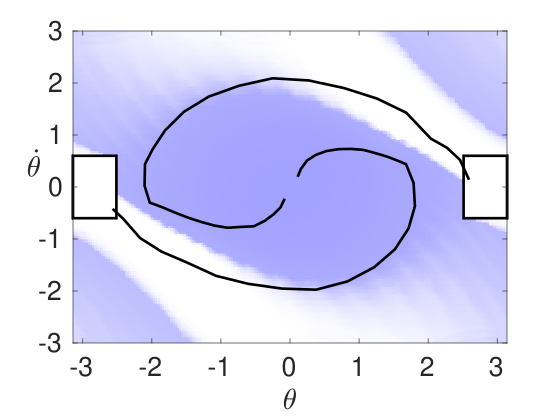}
        \subcaption{$N = 8\times10^4$}
    \end{minipage}
    \hfill
    \begin{minipage}[b]{.24\linewidth}
    \centering
        \includegraphics[width=\linewidth]{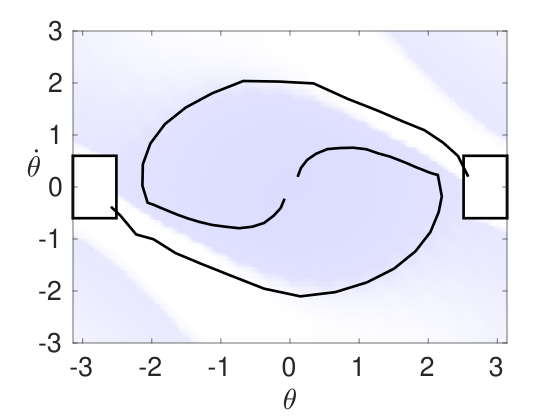}
        \subcaption{$N = 10^5$}
    \end{minipage}
    \hfill
    \begin{minipage}[b]{.24\linewidth}
    \centering
        \includegraphics[width=\linewidth]{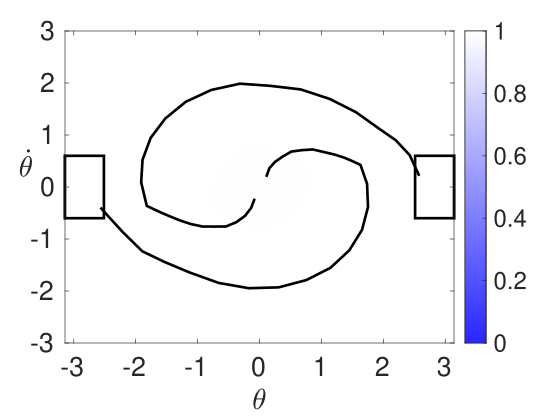}
        \subcaption{$N = 10^6$}
    \end{minipage}
    \caption{ Lower bound on the probability of satisfying $\varphi_1$ for the pendulum system as a function of $N$, together with sampled trajectories. Boxes represent the goal regions.}
    \label{fig:pendulum}
\end{figure}
\begin{table}[ht]
    \centering
    \caption{Results of the case studies. We denote by $e_{avg}$ the average difference between the lower and upper bounds in the satisfaction probabilities. Abstraction and synthesis times are given in minutes, for which respective timeouts of $8$ and $2$ hours are fixed. We denote by $\varepsilon$ and $\varepsilon_{cluster}$ the radius of the ambiguity set before and after clustering, respectively.
    }
    \scalebox{0.75}{
    \begin{tabular}{c|c|c|c|c|c|c|c|c|c}
        System (Spec.) & $|Q|$ & $|A|$ & $N$ & $\varepsilon$ & $N_\text{cluster}$ & $\varepsilon_\text{cluster}$ & $e_{avg}$ & Abstr. Time & Synth. Time \\
        \midrule
        Pendulum ($\varphi_1$) & $4\times10^4$ & $5$ & $10^{4}$ & $0.177$ & $43$ & $0.186$ & $0.744$ & $5.484$  & $7.474$ \\
         &  &  &  & $0.177$ & $-$ & $-$ & $-$ & timeout  & $-$ \\
         &  &  & $8\times10^4$ & $0.063$ & $48$ & $0.071$ & $0.219$ & $5.438$  & $120.000$ \\
         &  &  & $10^{5}$ & $0.056$ & $47$ & $0.065$ & $0.076$ & $5.406$ & $120.000$ \\
         &  &  & $10^{6}$ & $0.018$ & $49$ & $0.027$ & $0$ & $5.273$  & $61.167$ \\
        \midrule
        $3D$ Unicycle ($\varphi_1$) & $6.4\times10^4$ & $10$ & $10^{7}$ & $0.0014$ & $393$ & $0.00245$ & $0.517$ & $22.632$ & $51.942$\\
        & & & & $0.0014$ & $-$ & $-$ & $-$ & timeout & $-$\\
        & & & $10^{8}$ & $0.0047$ & $453$ & $0.0161$ & $0.498$ & $25.438$ & $43.183$\\
        & & & $5\times10^{8}$ & $0.0022$ & $4505$ & $0.006$ & $0.453$ & $235.950$ & $46.117$\\
        & & & & $0.0022$ & $8869$ & $0.0048$ & $0.447$ & $457.431$ & $43.342$\\
        \midrule
        Multiplicative & $10^{4}$ & $1$ & $4.74\times10^{3}$ & $0.034$ & $119$ & $0.041$ & $0.452$ & $1.598$ & $4.659$ \\
        noise ($\varphi_1$) &  &  &  & $0.034$ & $-$ & $-$ & $0.440$ & $59.947$ & $4.691$ \\
         \cite{skovbekk2023formal} &  &  & $4.67\times10^{4}$ & $0.012$ & $262$ & $0.017$ & $0.387$ & $3.241$ & $4.036$ \\
         &  &  &  & $0.012$ & $-$ & $-$ & $-$ & timeout & $-$ \\
         & & & $4.66\times10^{5}$ & $0.004$ & $1066$ & $0.0065$ & $0.323$ & $13.149$ & $3.863$ \\
         \midrule
         $2D$ Unicycle ($\varphi_1$) &$1.6\times10^3$ & $8$& $4\times10^5$ & $0.009$ & $385$ & $0.01$ &  $0.334$ & $0.224$ & $3.598$ \\
         \cite{gracia2023distributionally} & & &  & $0.009$ & $-$ & $-$ & $0.284$ & $341.463$ & $3.978$ \\
         & & & $2\times10^6$ & $0.013$ & $162$ & $0.015$ &  $0.180$ & $0.438$ & $3.582$ \\
         & & &  & $0.015$ & $-$ & $-$ &  $-$ & timeout  & $-$ \\
         & & & $5\times10^7$ & $0.004$ & $741$ & $0.005$ & $0.051$ & $0.735$ & $3.208$ \\
         \midrule
         $2D$ Unicycle ($\varphi_2$) &$3.6\times10^3$ & $8$& $10^{4}$ & $0.283$ & $16$ & $0.292$ & $0.494$ & $0.049$ & $5.538$ \\
          & & &  & $0.283$ & $-$ & $-$ & $0.484$ & $19.567$ &  $5.260$\\
         & & & $10^{5}$ & $0.100$ & $17$ & $0.109$ & $0.210$ & $0.044$ & $5.423$ \\
         & & &  & $0.100$ & $-$ & $-$ & $0.195$ & $180.454$ & $4.072$ \\
         & & & $10^{6}$ & $0.035$ & $44$ & $0.039$ & $0.079$ & $0.1021$ & $3.470$\\
         & & &  & $0.035$ & $-$ & $-$ & $-$ & timeout & $-$ \\
         & & & $10^{7}$ & $0.012$ & $46$ & $0.016$ & $0.03$  & $0.106$ & $3.043$ \\
    \end{tabular}
    }
    \label{tab:case_studies}
\end{table} 
In Table \ref{tab:case_studies} we report our results on $3$ nonlinear and $2$ linear systems. These models include a nonlinear pendulum in the presence of random wind and where the aerodynamic torque is proportional to $-(\dot\theta_t - \w_t\cos\theta_t) | \dot\theta_t - \w_t\cos\theta_t |$, and a $3D$ car model \cite{rajamani2011vehicle} where (nonlinear) \emph{Coulomb} friction is modeled by adding a random perturbation to the linear velocity. We also reproduce the results in \cite{skovbekk2023formal} for a system with multiplicative noise, and the results in \cite{gracia2023distributionally} for a $2D$ unicycle with additive noise. The specification considered for these systems is a reach-avoid one, i.e., $\varphi_1 = \Globally(\rm{safe})\land\Eventually(\rm{goal})$. To show that our approach can also tackle more complex formulas, we consider a $2D$ unicycle under specification $\varphi_2 := \Globally(\rm{safe}) \land \Globally(\rm{water}\rightarrow(\neg \rm{charge}\: \mathcal{U}\rm{carpet})))\land\Eventually(\rm{charge})$ \cite{vazquez2018learning}, which represents the task of simultaneously reaching a charge station while remaining safe and, if the system goes through a region with water, then it should first dry in a carpet before charging.

The results illustrate how our framework is able to provide non-trivial certificates of correctness and synthesize controllers for each of the examples. As expected from the theoretical results in Section~\ref{sec:learning}, it is possible to observe how the precision of our approach increases with the number of samples, i.e., $e_{avg}$, the difference between upper and lower bound in the probability of satisfying the \LTLf property, decreases with $N$. This can also be observed in Figure~\ref{fig:pendulum}, where we show the lower bound on the satisfaction probability for the pendulum system, for different values of $N$, while keeping all the other parameters constant.   Similarly, $e_{avg}$ decreases by increasing $|Q|$, the size of the abstraction. As a trade-off, the computation times increases with both $|Q|$ and $N$. This highlights the importance of clustering the samples, which we can observe in Table \ref{tab:case_studies}  to only have a relatively small effect in conservatism, while leading to substantial improvements in the time to build the abstraction. Finally, we empirically validated the theoretical bounds in the satisfaction probabilities via thousands of Monte Carlo simulations for random initial conditions. The empirical satisfaction probabilities lied within the theoretical bounds in all cases. 
\section{Conclusion and Future Work}

We present a framework to perform formal control of switched stochastic systems with general nonlinear dynamics and unknown disturbances under \LTLf specifications. The experimental results show the generality and effectiveness of our approach. Future research directions include increasing the tightness of our results and analyzing convergence of our solution to 
the optimal one.

\section{Acknowledgements}

This work is supported in part by National Science Foundation (NSF) under grant number 2039062 and Air Force Research Lab (AFRL) under agreement number FA9453-22-2-0050.

\bibliographystyle{IEEEtran}\bibliography{refs}

\begin{thebibliography}{10}
\providecommand{\url}[1]{#1}
\csname url@samestyle\endcsname
\providecommand{\newblock}{\relax}
\providecommand{\bibinfo}[2]{#2}
\providecommand{\BIBentrySTDinterwordspacing}{\spaceskip=0pt\relax}
\providecommand{\BIBentryALTinterwordstretchfactor}{4}
\providecommand{\BIBentryALTinterwordspacing}{\spaceskip=\fontdimen2\font plus
\BIBentryALTinterwordstretchfactor\fontdimen3\font minus \fontdimen4\font\relax}
\providecommand{\BIBforeignlanguage}[2]{{%
\expandafter\ifx\csname l@#1\endcsname\relax
\typeout{** WARNING: IEEEtran.bst: No hyphenation pattern has been}%
\typeout{** loaded for the language `#1'. Using the pattern for}%
\typeout{** the default language instead.}%
\else
\language=\csname l@#1\endcsname
\fi
#2}}
\providecommand{\BIBdecl}{\relax}
\BIBdecl

\bibitem{kumar2015stochastic}
P.~R. Kumar and P.~Varaiya, \emph{Stochastic systems: Estimation, identification, and adaptive control}.\hskip 1em plus 0.5em minus 0.4em\relax SIAM, 2015.

\bibitem{anderson2011continuous}
D.~F. Anderson and T.~G. Kurtz, ``Continuous time markov chain models for chemical reaction networks,'' in \emph{Design and analysis of biomolecular circuits: engineering approaches to systems and synthetic biology}.\hskip 1em plus 0.5em minus 0.4em\relax Springer, 2011, pp. 3--42.

\bibitem{rahimian2019distributionally}
H.~Rahimian and S.~Mehrotra, ``Distributionally robust optimization: A review,'' \emph{arXiv preprint arXiv:1908.05659}, 2019.

\bibitem{lahijanian2015formal}
M.~Lahijanian, S.~B. Andersson, and C.~Belta, ``Formal verification and synthesis for discrete-time stochastic systems,'' \emph{IEEE Transactions on Automatic Control}, vol.~60, no.~8, pp. 2031--2045, 2015.

\bibitem{cauchi2019efficiency}
N.~Cauchi, L.~Laurenti, M.~Lahijanian, A.~Abate, M.~Kwiatkowska, and L.~Cardelli, ``Efficiency through uncertainty: Scalable formal synthesis for stochastic hybrid systems,'' in \emph{Proceedings of the 2019 22nd ACM International Conference on Hybrid Systems: Computation and Control}.\hskip 1em plus 0.5em minus 0.4em\relax Montreal, QC, Canada: ACM, Apr. 2019.

\bibitem{lavaei2022automated}
A.~Lavaei, S.~Soudjani, A.~Abate, and M.~Zamani, ``Automated verification and synthesis of stochastic hybrid systems: A survey,'' \emph{Automatica}, vol. 146, p. 110617, 2022.

\bibitem{jagtap2020formal}
P.~Jagtap, S.~Soudjani, and M.~Zamani, ``Formal synthesis of stochastic systems via control barrier certificates,'' \emph{IEEE Transactions on Automatic Control}, vol.~66, no.~7, pp. 3097--3110, 2020.

\bibitem{mazouz2022safety}
R.~Mazouz, K.~Muvvala, A.~Ratheesh~Babu, L.~Laurenti, and M.~Lahijanian, ``Safety guarantees for neural network dynamic systems via stochastic barrier functions,'' \emph{Advances in Neural Information Processing Systems}, vol.~35, pp. 9672--9686, 2022.

\bibitem{santoyo2021barrier}
C.~Santoyo, M.~Dutreix, and S.~Coogan, ``A barrier function approach to finite-time stochastic system verification and control,'' \emph{Automatica}, vol. 125, p. 109439, 2021.

\bibitem{lechner2022stability}
M.~Lechner, {\DJ}.~{\v{Z}}ikeli{\'c}, K.~Chatterjee, and T.~A. Henzinger, ``Stability verification in stochastic control systems via neural network supermartingales,'' in \emph{Proceedings of the AAAI Conference on Artificial Intelligence}, vol.~36, no.~7, 2022, pp. 7326--7336.

\bibitem{martin2023regret}
A.~Martin, L.~Furieri, F.~D{\"o}rfler, J.~Lygeros, and G.~Ferrari-Trecate, ``Regret optimal control for uncertain stochastic systems,'' \emph{arXiv preprint arXiv:2304.14835}, 2023.

\bibitem{jackson2021strategy}
J.~Jackson, L.~Laurenti, E.~Frew, and M.~Lahijanian, ``Strategy synthesis for partially-known switched stochastic systems,'' in \emph{Proceedings of the 24th International Conference on Hybrid Systems: Computation and Control}, 2021, pp. 1--11.

\bibitem{schon2023bayesian}
O.~Sch{\"o}n, B.~van Huijgevoort, S.~Haesaert, and S.~Soudjani, ``Bayesian approach to temporal logic control of uncertain systems,'' \emph{arXiv preprint arXiv:2304.07428}, 2023.

\bibitem{badings2023robust}
T.~Badings, L.~Romao, A.~Abate, D.~Parker, H.~A. Poonawala, M.~Stoelinga, and N.~Jansen, ``Robust control for dynamical systems with non-gaussian noise via formal abstractions,'' \emph{Journal of Artificial Intelligence Research}, vol.~76, pp. 341--391, 2023.

\bibitem{mathiesen2023inner}
F.~B. Mathiesen, L.~Romao, S.~C. Calvert, A.~Abate, and L.~Laurenti, ``Inner approximations of stochastic programs for data-driven stochastic barrier function design,'' \emph{arXiv preprint arXiv:2304.04505}, 2023.

\bibitem{gracia2023distributionally}
I.~Gracia, D.~Boskos, L.~Laurenti, and M.~Mazo~Jr, ``Distributionally robust strategy synthesis for switched stochastic systems,'' in \emph{Proceedings of the 26th ACM International Conference on Hybrid Systems: Computation and Control}, 2023, pp. 1--10.

\bibitem{de2013linear}
G.~De~Giacomo and M.~Y. Vardi, ``Linear temporal logic and linear dynamic logic on finite traces,'' in \emph{IJCAI'13 Proceedings of the Twenty-Third international joint conference on Artificial Intelligence}.\hskip 1em plus 0.5em minus 0.4em\relax Association for Computing Machinery, 2013, pp. 854--860.

\bibitem{fournier2022convergence}
N.~Fournier, ``Convergence of the empirical measure in expected \uppercase{W}asserstein distance: non asymptotic explicit bounds in $\mathbb{R}^d$,'' \emph{arXiv preprint arXiv:2209.00923}, 2022.

\bibitem{nilim2005robust}
A.~Nilim and L.~El~Ghaoui, ``Robust control of \uppercase{M}arkov decision processes with uncertain transition matrices,'' \emph{Operations Research}, vol.~53, no.~5, pp. 780--798, 2005.

\bibitem{bertsekas1996stochastic}
D.~Bertsekas and S.~E. Shreve, \emph{Stochastic optimal control: the discrete-time case}.\hskip 1em plus 0.5em minus 0.4em\relax Athena Scientific, 1996, vol.~5.

\bibitem{wiesemann2013robust}
W.~Wiesemann, D.~Kuhn, and B.~Rustem, ``Robust \uppercase{M}arkov decision processes,'' \emph{Mathematics of Operations Research}, vol.~38, no.~1, pp. 153--183, 2013.

\bibitem{givan2000bounded}
R.~Givan, S.~Leach, and T.~Dean, ``Bounded-parameter {Markov} decision processes,'' \emph{Artificial Intelligence}, vol. 122, no. 1-2, pp. 71--109, 2000.

\bibitem{boskos2023high}
D.~Boskos, J.~Cort{\'e}s, and S.~Mart{\'\i}nez, ``High-confidence data-driven ambiguity sets for time-varying linear systems,'' \emph{IEEE Transactions on Automatic Control}, 2023, \url{https://arxiv.org/abs/2102.01142}.

\bibitem{boissard2014mean}
E.~Boissard and T.~Le~Gouic, ``On the mean speed of convergence of empirical and occupation measures in \uppercase{W}asserstein distance,'' in \emph{Annales de l'IHP Probabilit{\'e}s et statistiques}, vol.~50, no.~2, 2014, pp. 539--563.

\bibitem{adams2022formal}
S.~Adams, M.~Lahijanian, and L.~Laurenti, ``Formal control synthesis for stochastic neural network dynamic models,'' \emph{IEEE Control Systems Letters}, vol.~6, pp. 2858--2863, 2022.

\bibitem{gracia2022distributionally}
I.~Gracia, D.~Boskos, M.~Lahijanian, L.~Laurenti, and M.~M. Jr, ``Efficient strategy synthesis for switched stochastic systems with distributional uncertainty,'' 2022, \url{https://arxiv.org/abs/2212.14260}.

\bibitem{villani2021topics}
C.~Villani, \emph{Topics in optimal transportation}.\hskip 1em plus 0.5em minus 0.4em\relax American Mathematical Soc., 2021, vol.~58.

\bibitem{badings2023probabilities}
T.~Badings, L.~Romao, A.~Abate, and N.~Jansen, ``Probabilities are not enough: Formal controller synthesis for stochastic dynamical models with epistemic uncertainty,'' in \emph{Proceedings of the AAAI Conference on Artificial Intelligence}, vol.~37, no.~12, 2023, pp. 14\,701--14\,710.

\bibitem{ashok2019pac}
P.~Ashok, J.~K{\v{r}}et{\'\i}nsk{\`y}, and M.~Weininger, ``{PAC} statistical model checking for {M}arkov decision processes and stochastic games,'' in \emph{Computer Aided Verification: 31st International Conference, CAV 2019, New York City, NY, USA, July 15-18, 2019, Proceedings, Part I 31}.\hskip 1em plus 0.5em minus 0.4em\relax Springer, 2019, pp. 497--519.

\bibitem{jackson2021formal}
J.~Jackson, L.~Laurenti, E.~Frew, and M.~Lahijanian, ``Formal verification of unknown dynamical systems via \uppercase{G}aussian process regression,'' \emph{arXiv preprint arXiv:2201.00655}, 2021.

\bibitem{skovbekk2023formal}
J.~Skovbekk, L.~Laurenti, E.~Frew, and M.~Lahijanian, ``Formal abstraction of general stochastic systems via noise partitioning,'' \emph{arXiv preprint arXiv:2309.10702}, 2023.

\bibitem{rajamani2011vehicle}
R.~Rajamani, \emph{Vehicle dynamics and control}.\hskip 1em plus 0.5em minus 0.4em\relax Springer Science \& Business Media, 2011.

\bibitem{vazquez2018learning}
M.~Vazquez-Chanlatte, S.~Jha, A.~Tiwari, M.~K. Ho, and S.~Seshia, ``Learning task specifications from demonstrations,'' \emph{Advances in neural information processing systems}, vol.~31, 2018.

\end{thebibliography}

\end{document}